\newtheorem{lemma}{Lemma}
\DeclareRobustCommand{\qed}{%
  \ifmmode 
  \else \leavevmode\unskip\penalty9999 \hbox{}\nobreak\hfill
  \fi
  \quad\hbox{\qedsymbol}}
\newcommand{\openbox}{\leavevmode
  \hbox to.77778em{%
  \hfil\vrule
  \vbox to.675em{\hrule width.6em\vfil\hrule}%
  \vrule\hfil}}
\newcommand{\qedsymbol}{\openbox}
\newenvironment{proof}[1][\proofname]{\par
  \normalfont
  \topsep6\p@\@plus6\p@ \trivlist
  \item[\hskip\labelsep\itshape
    #1.]\ignorespaces
}{%
  \qed\endtrivlist
}
\newcommand{\proofname}{Proof}
\renewcommand{\thesubsection}{\thesection.\Roman{subsection}}
\newcommand{\Jeff}{\tilde{J}}
\newcommand{\approptoinn}[2]{\mathrel{\vcenter{
  \offinterlineskip\halign{\hfil$##$\cr
    #1\propto\cr\noalign{\kern2pt}#1\sim\cr\noalign{\kern-2pt}}}}}
\newcommand{\bigO}[1]{\mathcal O\left(#1\right)}
\crefname{section}{Sec.}{Secs.}
\definecolor{HitachiRed}{HTML}{FF0026}
\definecolor{spincolor}{HTML}{2E6E8E}
\definecolor{potential_color}{HTML}{1E9C89}
\colorlet{wellcolor}{potential_color!25!spincolor}
\definecolor{photon}{HTML}{FE6100}
\colorlet{inscolor}{gray!50!white}
\colorlet{sicolor}{inscolor!50!white}
\lstdefinestyle{inline}{basicstyle=\tt}
\DeclareSIUnit{\Hartree}{Ha}
\definecolor{myorange}{HTML}{FE6100}
\definecolor{myblue}{HTML}{64A5FF}
\definecolor{mypink}{HTML}{FF1972}
\definecolor{mygreen}{HTML}{62D09D}
\definecolor{myviolet}{HTML}{4F428E}
\definecolor{myyellow}{HTML}{FFBF00}
\renewcommand{\thesubsection}{\Roman{section} \Alph{subsection}}
\def\p@subsection{}
\def\p@subsubsection{}
\begin{document}

\title{
Minimal evolution times for fast, pulse-based state preparation in silicon spin qubits
}

\author{Christopher K. Long}
\email{ckl45@cam.ac.uk}
\affiliation{ Hitachi  Cambridge  Laboratory,  J.  J.  Thomson  Ave.,  
Cambridge,  CB3  0HE,  United  Kingdom
}
\affiliation{ Cavendish Laboratory,  Department of Physics,  University  of  
Cambridge,  Cambridge,  CB3  0HE,  United  Kingdom
}
\author{Nicholas J. Mayhall}
\affiliation{Department of Chemistry, Virginia Tech, Blacksburg, VA 24061, USA}
\author{Sophia E. Economou}
\affiliation{Department of Physics, Virginia Tech, Blacksburg, VA 24061, USA}
\author{Edwin Barnes}
\affiliation{Department of Physics, Virginia Tech, Blacksburg, VA 24061, USA}
\author{Crispin H. W. Barnes}
\affiliation{ Cavendish Laboratory,  Department of Physics,  University  of  
Cambridge,  Cambridge,  CB3  0HE,  United  Kingdom
}
\author{Frederico Martins}
\affiliation{ Hitachi  Cambridge  Laboratory,  J.  J.  Thomson  Ave.,  
Cambridge,  CB3  0HE,  United  Kingdom
}
\author{David R. M. Arvidsson-Shukur}
\affiliation{ Hitachi  Cambridge  Laboratory,  J.  J.  Thomson  Ave.,  
Cambridge,  CB3  0HE,  United  Kingdom
}
\author{Normann Mertig}
\affiliation{ Hitachi  Cambridge  Laboratory,  J.  J.  Thomson  Ave.,  
Cambridge,  CB3  0HE,  United  Kingdom
}

\date{\today}

\begin{abstract}
Standing as one of the most significant barriers to reaching quantum advantage, state-preparation fidelities on noisy intermediate-scale quantum processors suffer from quantum-gate errors, which accumulate over time. A potential remedy is pulse-based state preparation.
We numerically investigate the minimal evolution times (METs) attainable by optimizing (microwave and exchange) pulses on silicon hardware.
We investigate two state preparation tasks.
First, we consider the preparation of molecular ground states and find the METs for \ce{H2}, \ce{HeH+}, and \ce{LiH} to be \SI{2.4}{\ns}, \SI{4.4}{ns}, and \SI{27.2}{ns}, respectively.
Second, we consider transitions between arbitrary states and find the METs for transitions between arbitrary four-qubit states to be below \SI{50}{\ns}.
For comparison, connecting arbitrary two-qubit states via one- and two-qubit gates on the same silicon processor requires approximately \SI{200}{\ns}.
This comparison indicates that pulse-based state preparation is likely to utilize the coherence times of silicon hardware more efficiently than gate-based state preparation.
Finally, we quantify the effect of silicon device parameters on the MET.
We show that increasing the maximal exchange amplitude from \SI{10}{\MHz} to \SI{1}{\GHz} accelerates the METs, e.g., for \ce{H2} from \SI{84.3}{\ns} to \SI{2.4}{\ns}.
This demonstrates the importance of fast exchange.
We also show that increasing the maximal amplitude of the microwave drive from \SI{884}{\kHz} to \SI{56.6}{\MHz} shortens state transitions, e.g., for two-qubit states from \SI{1000}{\ns} to \SI{25}{\ns}.
Our results bound both the state-preparation times for general quantum algorithms and the execution times of variational quantum algorithms with silicon spin qubits.
\end{abstract}

\maketitle

\section{Introduction}

Quantum simulation is one of the most important applications of quantum computing. However, running quantum-simulation algorithms on fault-tolerant, error-corrected quantum computers requires significant hardware advances and is many years away. Instead, the near-term realization of quantum simulation on quantum processors in the pre-fault-tolerant era is an intriguing alternative. To that end, the dominant approach is based on variational quantum algorithms (VQAs).

In the standard VQA approach \cite{Cerezo2021,TILLY20221}, state preparation is carried out by a parameterized unitary. This unitary is transpiled into fixed two-qubit and parameterized single-qubit gates not necessarily native or natural to the hardware. These gates are then compiled into \textit{pulses}: time-dependent terms of the native Hamiltonian. The parameters of the quantum gates are then optimized iteratively, such that a cost function of the prepared state is lowered according to the variational principle of quantum mechanics. We refer to this standard VQA model as the \textit{gate-based} VQA model. Despite the enormous amount of research in this field, quantum advantage with gate-based VQAs has not been demonstrated. This is largely due to the short coherence times of current quantum hardware; \textit{gate-based} VQAs are too \textit{noise-sensitive} \cite{dalton2022variational, long2023layering, yanakiev2023dynamicadaptqaoa}.

One strategy to improve the noise sensitivity of VQAs is transforming them into \textit{pulse-based} algorithms \cite{magannPulsesCircuitsBack2021a,lloyd2020quantum,lloyd2021hamiltonian,asthanaMinimizingStatePreparation2022,sherbert2024parameterization,meiteiGatefreeStatePreparation2021a,eggerStudyPulsebasedVariational2023,ibrahimEvaluationParameterizedQuantum2022,meiromPANSATZPulsebasedAnsatz2023,yangOptimizingVariationalQuantum2017}. Pulse-based quantum algorithms allow for more fine-grained control of the quantum system and forgo the overhead induced by transpilation and compilation, thus, reducing execution times. This is demonstrated numerically by the recent pulse-based VQA called ``ctrl-VQE'' on superconducting qubits \cite{meiteiGatefreeStatePreparation2021a}. In pulse-based VQAs, the pulses for state preparation are parameterized, and the objective function (same as in gate-based VQAs) is optimized by varying the parameters. Given that all gate-based circuits are ultimately executed as pulses, pulse-based VQAs generalize all their gate-based counterparts. As a result, one can identify fundamental limits and draw conclusions about general VQAs by studying pulse-based state preparation. A prominent example of this is the minimum evolution time (MET), which is the fundamental limit on how long it takes for a system to evolve from one state to another \cite{asthanaMinimizingStatePreparation2022}. In our case, this will be especially useful when the initial state is some classical approximation to the solution (e.g., the Hartree-Fock state) and the final state is the solution to the simulation problem (e.g., a Hamiltonian's ground state). The concept of MET is also known in the optimal-control literature and related to `quantum speed limits' \cite{Deffner_2017}.

\begin{table*}[t]
  \begin{tabularx}{\textwidth}{lXlXrXcXc}
    \toprule
    Name                   && Symbol               && Value or Constraints && SiMOS && Si/SiGe \\\colrule
    Zeeman splitting       && $B$                  && \SI{28}{\GHz}\footnotemark[1]  && \cite{Yoneda2021} && -- \\
    Zeeman-splitting Inhomogeneity && $\Delta B_i$         && Equally spaced in range \qtyrange{-30}{30}{\MHz}\footnotemark[1] && \cite{Yoneda2021,Huang2024} && \cite{Philips2022} \\
    Drive strengths                && $g\left(t\right)$ && $\left|I_i\left(t\right)\right|, \left|Q_i\left(t\right)\right|\le\SI{20}{\MHz}\times\frac{1}{2\sqrt{2}}$\footnotemark[2] && \cite{doi:10.1126/sciadv.1600694} && \cite{Philips2022} \\
    Exchange coupling              && $J_i\left(t\right)$  && $\SI{0}{\GHz}\le J_i\left(t\right)\le\SI{1}{\GHz}$\footnotemark[3] && \cite{t_values} && \cite{Weinstein2023Mar,Simmons2009}\\
    Microwave frequencies          && $\omega_i$           && $\frac{\omega_i}{2\pi}$ in the range \qtyrange{27}{29}{\GHz} && -- && -- \\
    \botrule
  \end{tabularx}
  \caption{\textbf{Model parameters, \cref{eq:DeviceModel}.} Values are in units where $h=1$. For an explanation of the footnotes, see \cref{footnotes}.}
  \label{table: parameter values}
\end{table*}

The performance of pulse-based state preparation can depend dramatically on the type of quantum processor. This is because each hardware platform has distinct characteristics, connectivities, and control capabilities. As a result, pulse-based state preparation constitutes a tool to gauge the ultimate performance of a given processor. Compared to the quantification of coherence times, gate times, and other low-level metrics, the study of METs in pulse-based computation provides a more meaningful way to juxtapose and benchmark different hardwares' computational capabilities: The MET can be used to lower bound the execution time of any quantum algorithm on a given processor.

The early work on pulse-based VQAs \cite{asthanaMinimizingStatePreparation2022,sherbert2024parameterization,meiteiGatefreeStatePreparation2021a,eggerStudyPulsebasedVariational2023,ibrahimEvaluationParameterizedQuantum2022,meiromPANSATZPulsebasedAnsatz2023} focused on superconducting hardware. One of the main competitors to superconducting qubits is a system of quantum-dot spin qubits in silicon \cite{PhysRevA.57.120,PhysRevB.59.2070,RevModPhys.85.961,RevModPhys.95.025003}. The small size of silicon qubits and the potential for industrial scaling through advanced semiconductor-manufacturing technology \cite{Zwerver2022,10185272,9371956,elsayed2022low,Neyens2024} make them an attractive hardware option for quantum processors. Other beneficial features include long coherence times \cite{Stano2022,tanttu2024assessment}, competitive gate fidelities \cite{Yoneda2018, Xue2022, Noiri2022, doi:10.1126/sciadv.abn5130, Veldhorst2015Oct, Huang2024}, and operability above 1K \cite{Huang2024}. This has led to demonstrations of a six-qubit processor \cite{Philips2022}. Silicon hardware has also hosted a quantum simulator of Hubbard models \cite{Hensgens2017,PRXQuantum.2.017003}. While the realization of such simulators constitutes important milestones, the approach of analog quantum simulation is inherently limited to the emulation of specific Hamiltonians that are consistent with the architecture and native interactions of the device, such as lattice spin models. Many other problems of interest, such as simulating molecules, are not feasible with analog-simulator approaches. Pulse-based VQAs, on the other hand, enable universal quantum simulation. However, their performance on silicon hardware has, until now, not been explored.

In this paper, we investigate the performance of pulse-based state preparation on silicon quantum hardware. A schematic of the types of devices we consider is shown in \cref{fig: process diagram}(a). We use parameters extracted from recent experiments \cite{Yoneda2021,Huang2024,Philips2022, t_values,doi:10.1126/sciadv.1600694,Simmons2009,Weinstein2023Mar} to estimate the METs in two important settings. The first setting is the preparation of molecular ground states, a key component in quantum chemistry algorithms. The second setting is the unitarily driven transition between pairs of random states. We find that the METs for preparing the molecular ground state of \ce{H2}, \ce{HeH+}, and \ce{LiH} are \SI{2.4}{\ns}, \SI{4.4}{\ns}, and \SI{27.2}{\ns}, respectively. This is significantly faster than the $\sim$15ns MET computed for \ce{H2} on a transmon architecture in Ref.~\cite{asthanaMinimizingStatePreparation2022}. For transitions between arbitrary pairs of four-qubit states, we show that the MET is always shorter than \SI{50}{\ns}. Interestingly, both values are significantly smaller than METs of gate-based approaches,  
(which require at least \SI{200}{\ns}), indicating that significant noise improvements are available via pulse-based quantum computation. Finally, we explore how pulse parameters affect the MET. For example, we show that increasing the maximal exchange amplitude from \SI{10}{\MHz} to \SI{1}{\GHz} accelerates the METs of \ce{H2} from  \SI{84.3}{\ns} to \SI{2.4}{\ns}. 

\begin{figure*}[t]
    \centering
    \includegraphics{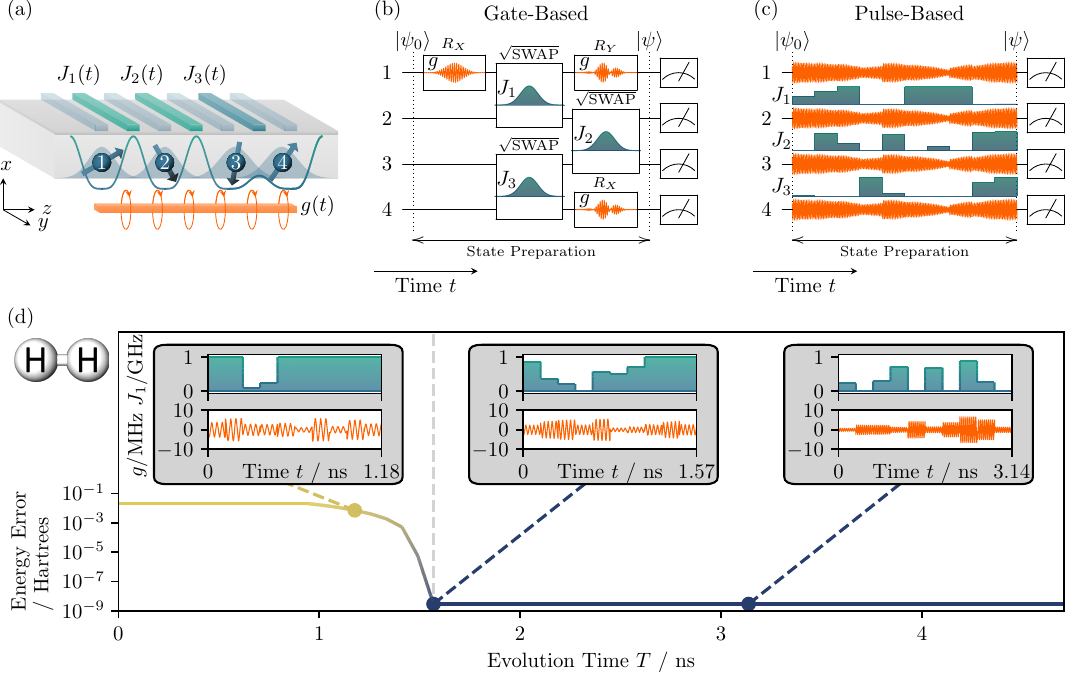}
    \caption{\textbf{Device Diagram, State preparation, and MET. (a) Diagram of a SiMOS processor:} Plunger- (semi-transparent blue) and $J$-gates (blue) on a \ce{SiO_x} layer (gray) confine electron spins in potential wells (blue curve) within silicon (light gray). The $J$-gates can lower the potential barriers and induce exchange interactions (qubits 3 and 4). A static magnetic field along the $z$-direction induces Zeeman splittings. An electron-spin-resonance antenna (orange) induces a time-dependent magnetic field. \textbf{(b) Gate-based state preparation:} $\ket{\psi}$ is prepared using pre-calibrated one- and two-qubit gates (black boxes in the circuit diagram) driven by microwave bursts (orange lines) and voltage pulses (blue shaded regions), respectively. \textbf{(c) Pulse-based state preparation:} $\ket{\psi}$ is prepared by changing microwave bursts (orange lines) and voltage pulses (blue-shaded regions) simultaneously. \textbf{(d) Numerically calculated energy errors $\Delta(T)$'s for \ce{H2}.} A vertical dashed line marks the MET. Insets show the optimal pulse shapes as a function of time $t$ for various total evolution times $T$.}
    \label{fig: process diagram}
\end{figure*}

The remainder of this article is structured as follows: In \cref{sec: device model}, we introduce our silicon-processor model. In \cref{Sec: StatePrep}, we review pulse-based state preparation. In \cref{sec: molecular} and \cref{sec: haar}, we present METs for pulse-based preparation of molecular ground states and Haar random state transitions, respectively. In \cref{sec: limiting factors}, we explore the dependence of METs on silicon control parameters. We conclude in \cref{sec: conclusion}.

\section{Silicon quantum-processor model}
\label{sec: device model}

We use a Heisenberg spin-chain model \cite{PhysRevB.100.035304} to describe a typical 1D chain of spin qubits in SiMOS heterostructures \cite{Yoneda2021,10185272,elsayed2022low,Huang2024} [see \cref{fig: process diagram}(a)]:
\begin{equation}
    \label{eq:DeviceModel}
  \!\hat H\!\left(t\right)\!=\!-\!\!\sum_{i=1}^N\!\frac{B_i}{2}\hat \sigma^{\left(i\right)}_z\!+\!\frac{g\!\left(t\right)}{2}\!\!\sum_{i=1}^N\!\hat \sigma^{\left(i\right)}_x-\!\!\!\sum_{i=1}^{N-1}\!\frac{J_i\!\left(t\right)}{4}\vec{\hat \sigma}^{\left(i\right)}\cdot \vec{\hat \sigma}^{\left(i+1\right)}.\!
\end{equation}
Here, $\vec{\hat{\sigma}}^{\left(i\right)} = \left( \hat{\sigma}_x^{(i)},\hat{\sigma}_y^{(i)},\hat{\sigma}_z^{(i)} \right)$ is a vector of Pauli operators of the spin of a single electron, trapped in the $i$th quantum dot under a plunger gate.

An external magnetic field ($\approx\SI{1}{\tesla}$) induces a Zeeman splitting of $B_i=B+\Delta B_i$ on the electron spin in the $i$th dot with an average value of $B=\frac{1}{N}\sum_{i=1}^N B_i$ ($\approx\SI{28}{GHz}$) \cite{Yoneda2021}. We assume detunings of $B_{i+1}-B_i=\Delta B$, similar to Refs.~\cite{Huang2024,Yoneda2021},  where $\Delta B$ is approximately $\qtyrange{10}{30}{\MHz}$.

One-qubit operations are achieved via electron-spin resonance using a microwave antenna \cite{Dehollain_2013} [see \cref{fig: process diagram}(a)], which induces a time-dependent global magnetic field in the $x$-direction. The field induced by the microwave drive is proportional to
\begin{align}
g\left(t\right)=\sum_{i=1}^N\left[I_i\left(t\right)\cos\left(\omega_it\right)+Q_i\left(t\right)\sin\left(\omega_it\right)\right].
\end{align}
The field can be produced with an arbitrary waveform generator, where $I_i \left( t \right)$ and $Q_i\left(t\right)$ are the in-phase and quadrature components of $N$ microwave pulses with carrier frequency $\omega_i$, respectively. Typically, $|I_i\left(t\right)|, |Q_i\left(t\right)|\le$\SI{10}{\MHz} \cite{doi:10.1126/sciadv.1600694} and $\omega_i$ usually varies by up to \SI{1}{\GHz} from the Zeeman splitting (depending on the bandwidth of the arbitrary waveform generator).

Finally, two-qubit operations are induced via nearest-neighbor exchange couplings between spin $i$ and $i+1$. The entangling operation is controlled by voltage pulses on the barrier gates, which alters the exchange strength $J_i(t)$. Typically, $J_i(t)\le\SI{10}{\MHz}$ \cite{tanttu2024assessment, Huang2024, Veldhorst2015Oct}, but exchange couplings in SiMOS as large as $J_i(t)\approx\SI{1}{\GHz}$ have already been measured \cite{t_values}. Moreover, the control of fast exchange pulses has previously been demonstrated in \ce{GaAs} \cite{PhysRevLett.110.146804, PhysRevLett.116.116801}. As we will show below, fast exchange is crucial to achieve short METs.

The parameters of our spin-chain model are summarized in \cref{table: parameter values}. As discussed above, these parameters describe realistic SiMOS devices \cite{ Yoneda2021, Huang2024, t_values, doi:10.1126/sciadv.1600694}. Similar parameters also describe electron spin-qubits in Silicon/SiliconGermanium (\ce{Si/SiGe}) heterostructures \cite{Weinstein2023Mar,Simmons2009,Xue2022,Noiri2022,doi:10.1126/sciadv.abn5130,Philips2022,doi:10.1126/science.aao5965}. In \ce{Si/SiGe}, the Zeeman-splitting inhomogeneity and $g\left(t\right)$ originates from micromagnets \cite{Yoneda_2015} and electron dipole spin resonance type driving via the confinement gate \cite{Xue2022, Noiri2022,  doi:10.1126/sciadv.abn5130, Philips2022, doi:10.1126/science.aao5965}, respectively. As with SiMOS devices, experiments in Si/SiGe tend to use slow exchange ($J_i(t) \le \SI{10}{\MHz}$) \cite{Xue2022, Noiri2022, doi:10.1126/sciadv.abn5130, Philips2022, doi:10.1126/science.aao5965}. However, fast exchange ($J_i(t) \approx \SI{1}{\GHz}$) has already been demonstrated in Refs.~\cite{Weinstein2023Mar,Simmons2009}.

\section{Pulse-based state preparation}
\label{Sec: StatePrep}

In conventional gate-based approaches, a target quantum state is prepared by evolving an initial state through a sequence of quantum gates. These gates are implemented by executing pre-optimized control pulses to realize the specific parameterized one- and fixed two-qubit gates. In \cref{app: gate-bounds}, we also allow two-qubit gates to be parameterized and calculate gate-based state-preparation times. These times serve as lower bounds for state preparation with fixed two-qubit gates. In contrast, pulse-based state preparation is realized by directly optimizing the control pulses that drive the device state through native interactions. Since gate-based state preparation is less flexible, in a variational sense, pulse-based state preparation is able to execute in shorter times.

A device-agnostic implementation of pulse-based state preparation utilizing variational pulse-shaping \cite{KHANEJA2005296,meiteiGatefreeStatePreparation2021a} consists of three steps: (i) The input is a cost Hamiltonian $\hat{C}$ whose (potentially unknown) ground state $\ket{\phi}$ we wish to prepare. For example, $\hat{C}$ can encode a molecular Hamiltonian \cite{RevModPhys.92.015003}. It can always be expressed as a linear combination of Pauli strings \cite{RevModPhys.92.015003,TILLY20221} whose expectation value, in the case of physically relevant problems, can be sampled efficiently \cite{TILLY20221}. (ii) We fix the total evolution time $T$ and prepare the qubit register in the state $\ket{\psi(\bm{x};T)}$. Here, $\bm{x}$ denotes the control parameters of the silicon processor. (iii) These parameters are iteratively tuned using a classical optimizer to obtain the minimal expectation value 
\begin{align}
    C(T) \coloneqq\min_{\bm{x}}\left\{\mel{\psi(\bm{x};T)}{\hat C}{\psi(\bm{x};T)}\right\}.
\end{align}
Provided that $T$ is sufficiently large and $\bm{x}$ sufficiently expressive, variational pulse-shaping yields $\ket{\psi(\bm{x}_{\star};T)}=\ket{\phi}$ for the optimal parameters $\bm{x}_{\star}$.

Specifically, for our device (see \cref{sec: device model}), a cost Hamiltonian $\hat{C}$ and fixed evolution time $T$, we implement variational pulse-shaping as follows: Working in the frame rotating with the drift Hamiltonian $\hat{H}_D \coloneqq-\frac{1}{2}\sum_{i=1}^N\!B_i\hat \sigma^{\left(i\right)}_z$, we numerically integrate the Schrödinger equation of $\hat{H}$. This yields the state $\ket{\psi(\bm{x}; T)}$ in the rotating frame. We optimize its control parameters $\bm{x} \coloneqq \left\{I_i\left(t\right), Q_i\left(t\right), J_j\left(t\right), \omega_i\right\}$ within the constraints of \cref{table: parameter values}. We use the quadrature parameterization of $g\left(t\right)$ as opposed to the polar parameterization to make optimization easier \cite{sherbert2024parameterization}. Further, for convenience, we parametrize each control using a piecewise linear function. That is, we use $M$ segments of duration $\Delta t=T/M$. For each $t\in\left[\left(m-1\right)\Delta t,m\Delta t\right)$ with $m=1,...,M$ we set $I_i\left(t\right)=I_{i,m}$, $Q_i\left(t\right) =Q_{i,m}$ and $J_j\left(t\right)=J_{j,m}$. We then use the gradient ascent pulse-engineered (GRAPE) algorithm \cite{KHANEJA2005296} to find the parameters that achieve the minimal cost $C(T)$.

\cref{fig: process diagram}(d) illustrates numerically calculated $C(T)$s where $\hat{C}$ encods the Hamiltonian of \ce{H2}. For large $T$, GRAPE finds control parameters, such that $\ket{\psi(\bm{x};T)}$ approches the ground state $\ket{\phi}$ of \ce{H2} and $C(T)$ approaches the \ce{H2} ground-state energy $C_0$ of $\hat{C}$. For sufficiently small $T$, GRAPE no longer finds the ground state of \ce{H2}, and the energy error $\Delta(T)\coloneqq C(T)-C_0$ increases monotonically as a function of $T$. The time below which all $\Delta(T)>10^{-7}$ Hartrees is a numerical estimate of the minimal evolution time (MET). Below the MET, a silicon quantum processor can no longer accurately prepare the ground state of $\hat{C}$. Thus, hardware-specific METs provide lower bounds on the time of desired quantum evolutions. A protocol for implementing variational pulse-shaping and MET searches in experiments is given in \cref{app: cost function}.

\begin{figure*}
  \includegraphics{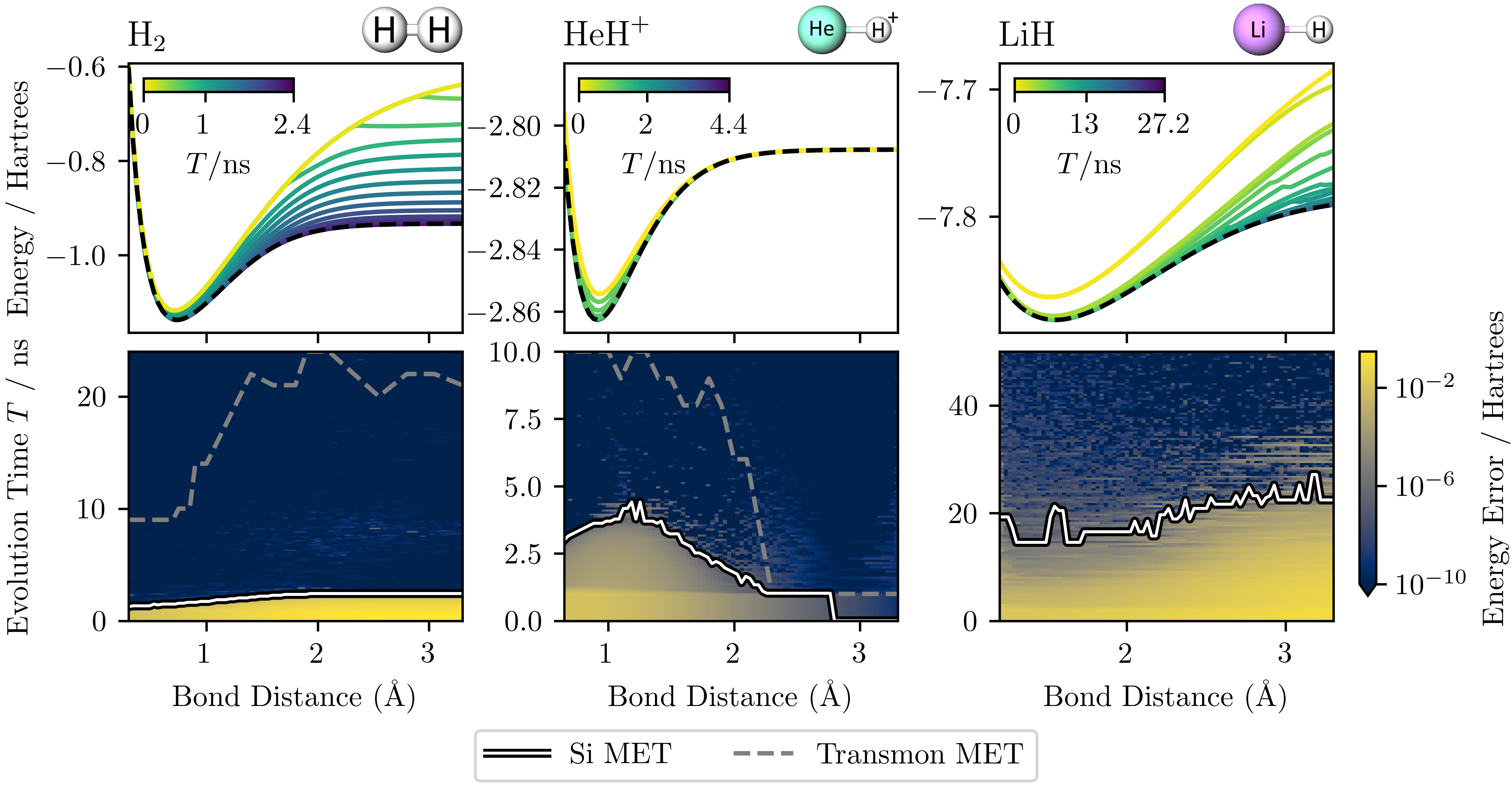}
  \caption{\textbf{Dissociation curves and METs for molecular-ground-state preparation} of \ce{H2}, \ce{HeH+} and \ce{LiH} (left to right).
  (top) Molecular energy $C(T)$ as a function of bond distance for a range of evolution times. At zero evolution time, $C(0)$ corresponds to the (yellow) Hartree-Fock disassociation curve of the initial state. At the MET, the energy $C(T=\textrm{MET})$ (blue dissociation curve)  converges to the full configuration interaction (FCI) energy $C_0$ (dashed black curve). (bottom) Energy error $\Delta(T)\coloneqq C(T)-C_0$ on a logarithmic color scale as a function of bond distance and evolution time. A white line shows the silicon METs. A dashed (grey) curve marks transmon METs~\cite{meiteiGatefreeStatePreparation2021a}.
  }
  \label{fig: disassociation}
\end{figure*}

\section{Minimal evolution times for molecular ground states}
\label{sec: molecular}

In this section, we investigate the silicon processor's METs for variational pulse-based state preparation of molecular ground states. To facilitate comparison with previous works on superconducting hardware, we consider the molecules \ce{H2}, \ce{HeH+}, and \ce{LiH} with the same molecular Hamiltonians as in Ref.~\cite{meiteiGatefreeStatePreparation2021a}. We work in the STO-3G basis and use parity encoding that block diagonalizes the Hamiltonians. We then identify $\hat C$ with the block containing the ground state. We permute our basis so that the Hartree Fock state is given by $\ket{01}$ for \ce{H2} and \ce{HeH+}, and $\ket{0011}$ for \ce{LiH}---we then take these states as our initial states. The time required to prepare these initial states is not counted towards the MET.

Using the pulse-shaping methods of \cref{Sec: StatePrep}, we numerically calculate the molecular energies $C(T)$ of \ce{H2}, \ce{HeH+}, and \ce{LiH} as functions of the bond distances for a range of evolution times $T$. See \cref{fig: disassociation} (top row) for results. At zero evolution time $T=0$, the prepared state is the Hartree-Fock state, and the molecular energy $C(0)$ is the Hartree-Fock energy. A yellow line shows the corresponding dissociation curve. As $T$ increases, the (increasingly blue) dissociation curves $C(T)$ converge to the full configuration interaction (FCI) energies $C_0$, marked by a black dashed curve. For $T=\mathrm{MET}$, the (purple) dissociation curve $C(\mathrm{MET})$ is identical to the (FCI) energies $C_0$. The color bar is truncated at the largest MET along the bond distance.

To identify the METs for molecular ground-state preparation of \ce{H2}, \ce{HeH+}, and \ce{LiH}, we plot the energy error $\Delta(T) \coloneqq C(T)-C_0$ as a function of bond distance and evolution time $T$. The results are shown in \cref{fig: disassociation} (bottom row). For each bond distance, $\Delta(T)$ shows a sharp increase as $T$ decreases. For numerical purposes, we estimate the MET as the point where $\Delta(T)>10^{-7}$ Hartrees. A white line highlights the silicon METs. For \ce{HeH+}, the MET drops to zero at a large bond distance because the Hartree-Fock state converges to the true FCI ground state. For \ce{H2}, \ce{HeH+}, and \ce{LiH} we obtain METs as small as \SI{2.4}{\ns}, \SI{4.4}{ns}, and \SI{27.2}{ns}, respectively. For comparison, a gate-based circuit for preparing arbitrary states on two qubits requires at least \SI{200}{ns} (see \cref{app: two-qubit-gates}). Further,  these METs are one order of magnitude faster than the fastest single qubit $\pi$ rotations in silicon \cite{Stano2022}. These silicon METs are also one order of magnitude faster than the numerically-estimated METs for superconducting transmon qubits from Ref.~\cite{meiteiGatefreeStatePreparation2021a}, which are shown as gray dashed lines in \cref{fig: disassociation} (bottom). Overall, these results fuel the hope that pulse-based variational state preparation methods could significantly accelerate the state preparation in quantum chemistry and, thus, make near-term quantum algorithms on silicon quantum processors more resilient to noise.

\begin{figure*}
  \includegraphics{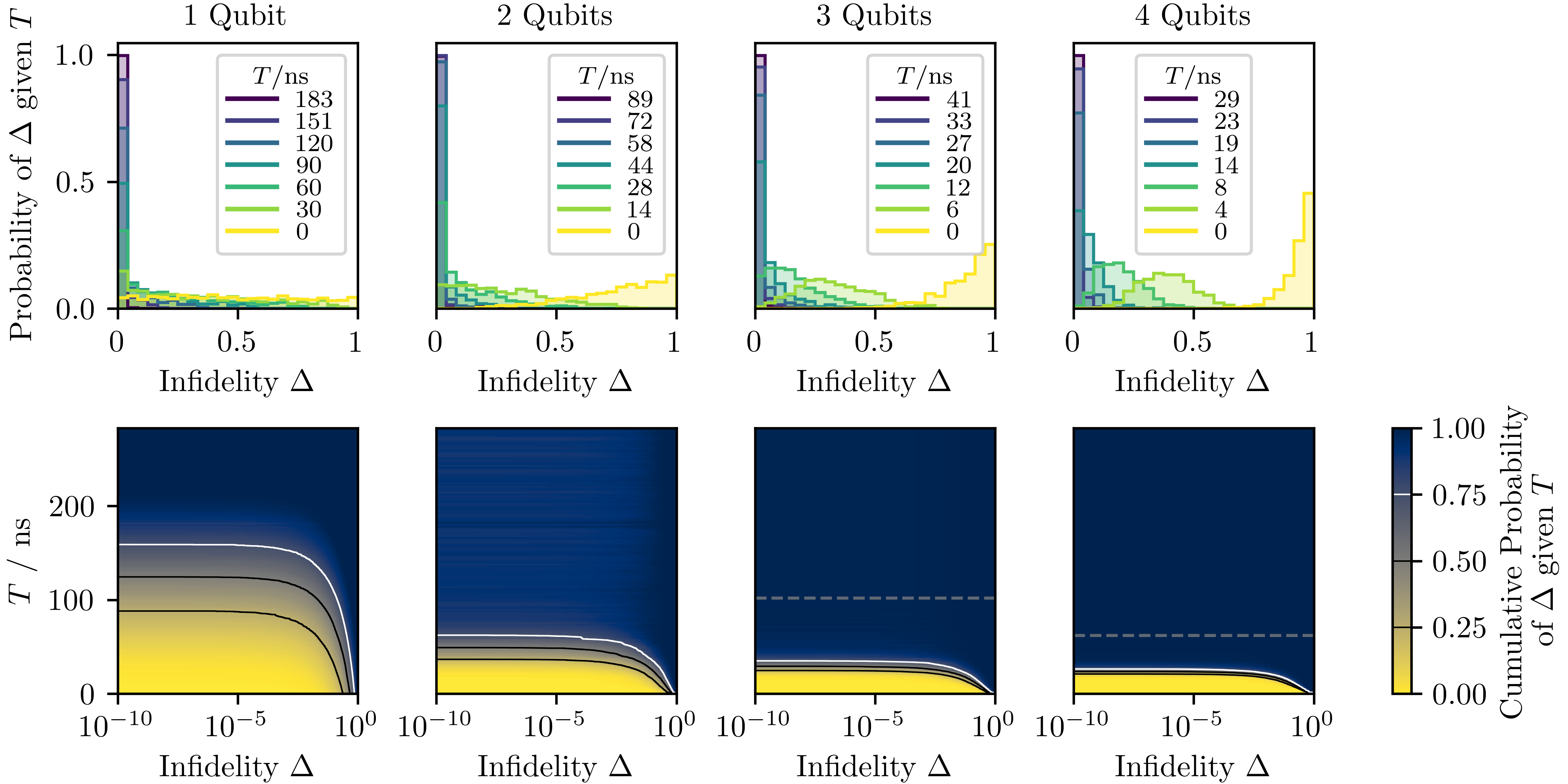}
  \caption{\textbf{Infidelity distributions \textit{vs.}\ evolution time $T$} for $1,2,3$, and $4$ qubits (left to right). (Top) Infidelity of 1024 Haar-random state pairs as histograms. Histogram colors indicate the evolution time $T$ as depicted by the insets. (Bottom) Cumulative distribution of infidelities as a color map, with infidelity on a log scale on the horizontal axis and evolution time $T$ on the vertical axis. (Right two columns bottom) Numerically calculated data is below the dashed lines, while extrapolated data is above. We average the ten distributions with $T$ just below the dashed curves for extrapolation.}
  \label{fig: haar}
\end{figure*}

We conclude this section with three technical remarks concerning our numerical simulations: (i) We discretized the control parameters into $M=10$ segments. (ii) To obtain the numerical data in \cref{fig: disassociation}, we first initialize GRAPE with a pulse of zero amplitude and three random pulse shapes for a bond distance where optimization is easy and select the final pulse shapes with minimal $C(T)$. Next, we move to the neighboring bond distance, where optimization is harder, and initialize GRAPE with three random pulse shapes and the optimal pulse shape of the previous bond distance. Then, we select the pulse shape with the minimal $C(T)$. For \ce{H2} and \ce{LiH}, the Hartree Fock ground state is a good approximation to the true ground state at short bond lengths. Thus, we scan from smaller to larger bond distances. For \ce{HeH+}, the Hartree-Fock state is identical to the FCI ground state in the limit of large bond distances. Hence, we scan from larger to smaller bond distances. (iii) We emphasize that scanning along bond distance, coupled with GRAPE, is not expected to be scalable to large qubit numbers. However, this work aims to investigate the ultimate performance limits of VQAs in silicon on small qubit numbers. Presenting a scalable method to achieve this optimal performance at large qubit numbers is a task left for future investigation.

It is now natural to ask: \textit{How does the MET depend on the device parameters?} We answer this question in \cref{sec: limiting factors}. But first, we consider the task of transitioning between random states in \cref{sec: haar}. Readers who are primarily interested in quantum computational chemistry simulations may skip ahead to \cref{sec: limiting factors}.

\section{Haar-random state transitions}
\label{sec: haar}

In this section, we determine the METs for transitions between arbitrary states. As the initial state for this task is no longer close to the target state, this will require longer METs compared to chemistry or physics problems, where the mean-field solution is often a good starting point. We find that for systems with two, three, or four qubits, the slowest MET is still faster than the fastest possible one-qubit $\pi$ gate. Further, we theoretically characterize the distribution of METs between Haar random states.

To numerically investigate METs between arbitrary state pairs on $N$ qubits, we first sample 1024 pairs of random input states $\ket{\psi_0}$ and target states $\ket{\phi}$ from the Haar measure. If arbitrary quantum control was available one could pick a fixed initial state and just sample random target states. However, real hardware cannot move uniformly through Hilbert space. Consequently, some initial states are ``better'' than others. For each state pair, we use the silicon device Hamiltonian $\hat{H}$ to numerically evolve $\ket{\psi_0}$ into the state $\ket{\psi(\bm{x};T)}$, maximizing its fidelity (overlap) with $\ket{\phi}$ as
\begin{align}
    F(T)\coloneqq \max_{\bm{x}}\left\{\left|\braket{\psi(\bm{x};T)}{\phi}\right|^2\right\}.
\end{align}
To achieve this goal for a given state pair and a given evolution time $T$, we use variational pulse-shaping with the cost Hamiltonian $\hat{C}$ defined as $\hat{C} \coloneqq \mathds{1}-\ket{\phi}\bra{\phi}$. The expectation value of $\hat{C}$ is the infidelity $C\left(T\right)\coloneqq 1-F(T)$ to the state $\ket{\phi}$. The results of our numerical investigation are depicted in Figs.~\ref{fig: haar} and \ref{fig: MET distribution} for one to four qubits (left to right, respectively).

In \cref{fig: haar}, we analyze the infidelity of each state pair as a function of evolution time $T$. The top panels show the infidelities of the 1024 state pairs as histograms, where each histogram corresponds to a distinct evolution time $T$. For $T=0$ (yellow histogram), the distribution of infidelities is broad, indicating the broad range of overlaps $\left|\braket{\psi_0}{\phi}\right|^2$ between Haar-random state pairs prior to time evolution (an analytical expression for this initial distribution is given in \cref{eq: initial dist} in \cref{app: homogeneous and isotropic speed limit}). As $T$ increases (increasingly blue histograms), the evolved input states $\ket{\psi(\bm{x}; T)}$ increase their overlap with the corresponding target states $\ket{\phi}$. This leads to higher fidelities $F(T)$, shifting the histograms towards lower infidelity. Moreover, as $T$ increases, an increasing fraction of states $\ket{\psi(\bm{x};T)}$ reaches the target state $\ket{\phi}$. For this fraction of state pairs $F(T)=1$, leading to an increasing peak of the histograms around infidelity equal to zero. Finally, as the MET for the hardest-to-connect state pair is reached, the (purple) infidelity histogram entirely localizes around infidelity equal to zero. The bottom row shows the estimated cumulative probability distribution of the infidelity, $\mathbb P\left(C\left(T\right)\le \Delta\middle | T\right)$, of state pairs on a log scale along the horizontal axis and as a function of the evolution time $T$ along the vertical axis.

In \cref{fig: MET distribution}, we approximate the cumulative distribution of the MET, $\mathbb{P}\left(\textrm{MET}\le T\right)$, by the fraction of the 1024 state pairs, which reach an infidelity $C(T)<10^{-7}$ in an evolution time less than $T$. A blue-shaded region marks the corresponding confidence interval of 99.99\%. For comparison, we overlay the time required by the same processor to implement $\frac{1}{\sqrt{2}}\left(X\pm Y\right)$ (the fastest single-qubit $\pi$-gate), $X$, $Y$, and $\sqrt{\textrm{SWAP}}$ gates as blue vertical lines in \cref{fig: MET distribution}. In a one-qubit system, the time for operating the fastest $\pi$-gate is slightly shorter than the MET of the hardest-to-connect state pairs. However, for systems with more than two qubits, we observe that even the slowest random state transitions can be accomplished in less than the time set by the fastest one-qubit $\pi$-gate. This may imply that embedding hard-to-connect single-qubit states into a system with two or more qubits could accelerate the transition between one qubit states by using, e.g., fast exchange interactions in the expanded space of neighboring qubits. This effect is reminiscent of accelerated state preparation through excited states in superconducting devices \cite{asthanaMinimizingStatePreparation2022} and will be investigated in more detail in future works.

\begin{figure*}[t]
  \includegraphics{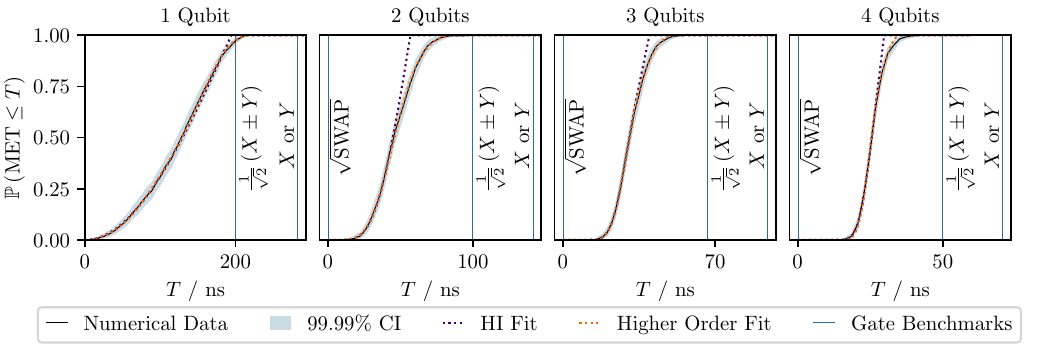}
  \caption{\textbf{Cumulative probability distribution of the MET for haar random state transitions.} The black solid curves are the numerical data for 1024 state pairs. The shaded region corresponds to a 99.99\% confidence interval (CI) estimated via $10^5$ bootstrap resamples (see \cref{app: bootstrapping}). Vertical blue lines mark typical gate times. The dashed purple and orange curves are the homogeneous and isotropic (HI) fits and higher-order fits, respectively. For details of the fit see \cref{app: chis}.}
  \label{fig: MET distribution}
\end{figure*}

Next, to underpin our numerical studies, we derive two expressions to describe the cumulative distribution of METs depicted in \cref{fig: MET distribution}. The first expression assumes that a pair of states is connected in the shortest time by traveling with a constant speed $v$ along the geodesics of the manifold of $N$-qubit states---i.e., the speed limit is homogeneous and isotropic. For a single qubit, this manifold corresponds to the Bloch sphere, and we assume the fastest transition connecting any two quantum states on the Bloch sphere corresponds to traveling along an arc of a great circle at speed $v$. In \cref{app: homogeneous and isotropic speed limit}, we show that this assumption gives rise to the following distribution: 
\begin{multline}\label{eq: analytic}
  \mathbb P\left(\textrm{MET}\le T\right)=\frac{2}{B\left(d-1,\frac{1}{2}\right)}\int\limits_0^{\frac{vT}{2}}\dd{x}\sin^{2d-3}(x),
\end{multline}
for $0\le T\le\frac{\pi}{v}$ where $d$ is the Hilbert space dimension, and $B\left(\bullet,\bullet\right)$ is the beta function. We fit the numerical data by adjusting the speed parameter $v$---this gives the dashed purple curves in \cref{fig: MET distribution}. In general, the assumption (that a pair of states is connected in the shortest time by traveling along a geodesic with a constant speed) will not hold---as in our particular hardware. Our second expression,
\begin{align}\label{eq: expansion}
  &\mathbb P\left(\textrm{MET}\le T\right)=\int\limits_0^{\frac{\tilde vT}{2}}\dd{x}\sum_{n=2d-3}^\infty c_n\sin^{n}(x),\\
  \textrm{where }&\tilde v\coloneqq\frac{\pi}{\displaystyle\max_{\ket{\psi_0}, \ket{\phi}}\textrm{MET}\left(\ket{\psi_0}, \ket{\phi}\right)},
\end{align}
derived in \cref{app: inhomogeneous and anisotropic speed limit}, generalizes the first expression by adding correction terms to account for the fact that the speed at which the state can move on the manifold may depend on the location within the manifold and the direction of motion---i.e., the speed limit is inhomogeneous and anisotropic. For example, in our simulations of silicon hardware, we fix the maximum values for $I_i\left(t\right)$ and $Q_i\left(t\right)$, which induce $X$ and $Y$ rotations, respectively. We can rotate a state about the $\frac{1}{\sqrt{2}}\left(X\pm Y\right)$ axes by simultaneously driving $I_i\left(t\right)$ and $Q_i\left(t\right)$. Thus, rotations about the $\frac{1}{\sqrt{2}}\left(X\pm Y\right)$ axes will be a factor of $\sqrt{2}$ faster than about the $X$ or $Y$ axes. We fit a truncation of the generalized expression [\cref{eq: expansion}]; see the dashed orange curves in \cref{fig: MET distribution} and \cref{app: chis}.

Finally, we conclude this section with two brief technical remarks about our numerical simulations: (i) We discretized the control parameters into $M=40$ segments. (ii) Like in \cref{sec: molecular}, we implement variational pulse-shaping as follows: First, we initialize GRAPE with three random pulse shapes at the maximum evolution time $T$ over which we will scan. Next, we move to the next shortest evolution time $T$ and initialize GRAPE with two random pulse shapes and the optimal pulse shape of the previous value for $T$. Then, we select the pulse shape with the minimal $C(T)$.

\section{Dependence of minimal evolution times on silicon device parameters}
\label{sec: limiting factors}

In this section, we investigate how silicon METs depend on the device parameters and their experimental bounds in a silicon processor. In particular, we investigate the dependence of METs on the maximal drive strength $\textrm{IQ}_{\max}$, the maximal exchange coupling $J_{\max}$, and the Zeeman-splitting inhomogeneity $\Delta B$. The investigation is motivated by gate-based qubit operations in silicon. On the one hand, $\textrm{IQ}_{\max}$ is known to limit the speed of one-qubit operations, which are known to be relatively slow \cite{Stano2022}. On the other hand, $J_{\max}$ is known to enable relatively fast two-qubit power-of-SWAP gates \cite{Stano2022}. We investigate whether similar parameter limitations apply to pulse-based state preparation.

\begin{figure}
  \includegraphics{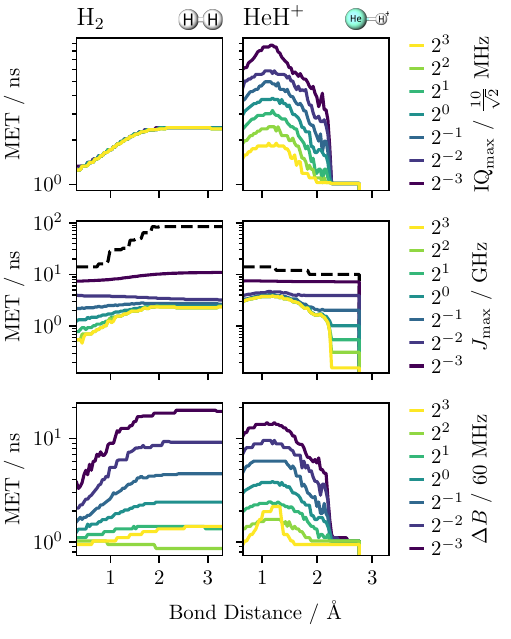}
  \caption{\textbf{MET for molecular ground-state preparation} \textit{vs.}\ bond distance for (left) \ce{H2} and (right) \ce{HeH+}, respectively. Colored curves depict the MET for independently varying (top) the maximal drive strength $\textrm{IQ}_{\max}$, (center) the maximal exchange coupling $J_{\max}$, and (bottom) the Zeeman-splitting inhomogeneity $\Delta B$ relative to their specified values in \cref{table: parameter values} as decoded by the color bars. The black dashed line in the central row is for $J_{\max}=\SI{10}{\MHz}$.}
  \label{fig: mol limit}
\end{figure}

\begin{figure}
  \includegraphics{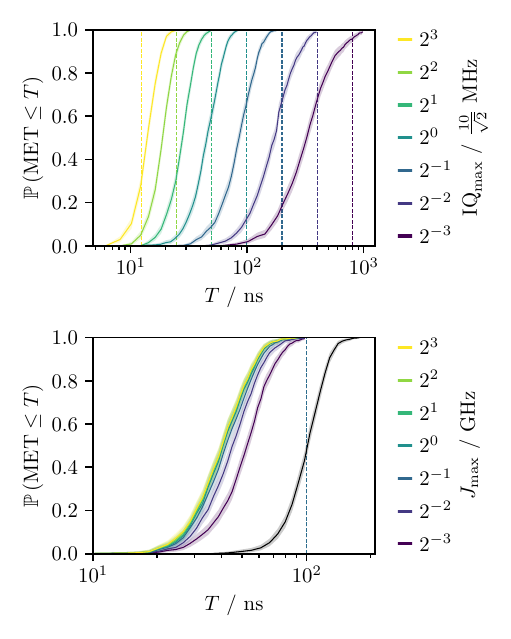}
  \caption{\textbf{Cumulative distribution of METs} for Haar random state transitions (colored curves) for independently varying (top) the maximal drive strength $\textrm{IQ}_{\max}$ and (bottom) the maximal exchange coupling $J_{\max}$ relative to the values in \cref{table: parameter values} as decoded by the color bars. Shaded regions correspond to 99\% confidence intervals. Dashed lines mark the time to perform a single qubit $\left(X\pm Y\right)/\sqrt{2}$ gate. The black curve in the bottom panel is for $J_{\max}=\SI{10}{\MHz}$.}
  \label{fig: haar limit}
\end{figure}

To begin with, we investigate the influence of microwave drives, exchange couplings, and Zeeman-splitting inhomogeneities on METs for molecular ground-state preparation. To this end, we extend the numerical investigation of \cref{sec: molecular} and determine the METs for \ce{H2} and \ce{HeH+}. The data is shown in \cref{fig: mol limit}, where colored curves depict the MET. The top, center, and bottom panels, show METs where $\textrm{IQ}_{\max}$, $J_{\max}$, and $\Delta B$ are varied independently, relative to their values in \cref{table: parameter values}. Roughly speaking, this data confirms that larger $\textrm{IQ}_{\max}$, larger $J_{\max}$, and larger $\Delta B$ lead to faster state-preparation times and thus shorter METs. In addition, we calculate the METs for molecular ground-state preparation (dashed black line in center panels) for $J_{\max}=\SI{10}{\MHz}$---a common value for realizing CZ \cite{Philips2022, Xue2022, doi:10.1126/sciadv.abn5130, tanttu2024assessment} or CNOT \cite{doi:10.1126/science.aao5965} gates. The corresponding METs for \ce{H2} and \ce{HeH+} become as slow as \SI{100}{\ns} and \SI{10}{\ns}, respectively. This observation highlights the importance of fast exchange for accelerated state preparation. A deeper discussion of the METs for molecular ground-state preparation of \ce{H2} and \ce{HeH+} in relation to the one-qubit drives is given below.

Next, we investigate the influence of $\textrm{IQ}_{\max}$ and $J_{\max}$ on transitions between Haar random states. To this end, we extend the numerical investigation of \cref{sec: haar} and determine the cumulative distribution of METs from 1024 pairs of Haar random states. The data is shown in \cref{fig: haar limit}. In the top panel, we apply the device parameters of \cref{table: parameter values} and vary $\textrm{IQ}_{\max}$. As the drive strength varies, the METs are scaled by the same factor. In the bottom panel, we apply the device parameters of \cref{table: parameter values} and vary $J_{\max}$. As the exchange coupling is varied, the cumulative METs remain almost identical. These data confirm our intuition that silicon pulse-based METs---similar to gate-based state-preparation times---are, on average, limited by $\textrm{IQ}_{\max}$. On the other hand, lowering $J_{\max}$ to \SI{125}{MHz} would not lower METs significantly. However, once $J_{\max}$ is reduced to \SI{10}{MHz} (black curve) we find that the MET starts to increase significantly. Again, this illustrates the importance of fast exchange for achieving small METs.

\subsection{Molecular ground states and qubit drives}

We now discuss, in more detail, how specific microwave and exchange pulses achieve the METs in \cref{fig: mol limit}. In \cref{app: ex}, we show that the exchange operation in the rotating frame of the drift Hamiltonian is \textit{power-of-SWAP like}: In particular in the $\ket{01}$, $\ket{10}$ subspace, it drives population transfer with a phase proportional to $\Delta B t_0 /\hbar$. Here, $t_0$ is the time at which exchange is turned on (as compared to the start of the pulse at $t=0$). Thus, waiting until time $t_0$ before switching on the exchange pulse allows us to adjust the phase between the $\ket{01}$ and $\ket{10}$ states.

First, consider the MET data for \ce{H2} in \cref{fig: mol limit}. For \ce{H2}, the initial Hartree-Fock state is $\ket{01}$. The true ground state (over the range scanned) is well approximated by the state proportional to $\ket{01} + \epsilon e^{i\phi} \ket{10}$. Notably, the absence of $\ket{00}$ and $\ket{11}$ components in the molecular ground state of \ce{H2}, implies that single-qubit gates are not required to prepare the molecular ground state of \ce{H2} from $\ket{01}$. For that reason, changing $\textrm{IQ}_{\max}$ (top left figure of \cref{fig: mol limit}) has no influence on the MET. Next, in order to prepare the ground state of \ce{H2}, we require several applications of the exchange pulse to transfer some population from the initial Hartree-Fock state $\ket{01}$ into $\ket{10}$ with the correct phase. In \cref{fig: mol limit} (center left), we can see that this transfer tends to happen faster as $J_{\mathrm{max}}$ is increased. However, after $J_{\mathrm{max}}$ reaches \SI{0.5}{\GHz} the acceleration of the MET slows down. We conjecture that this slowdown is related to a wait time required to build up the correct phase $\Delta B t_0 /\hbar$. Finally, in \cref{fig: mol limit} (bottom left), we see that increasing the detuning $\Delta B$ tends to accelerate the MET. We conjecture that this observation is related to the fact that a larger $\Delta B$ allows for faster accumulation of the phase $\Delta B t_0 /\hbar$.

Finally, for \ce{HeH+}, we observe three different scenarios in \cref{fig: mol limit}, depending on the bond length. At large bond lengths beyond ($\ge\SI{2.8}{\angstrom}$), the initial Hartree-Fock state $\ket{01}$ describes the true ground state with sufficient accuracy---thus, the MET is zero. At intermediate bond distances (\qtyrange{2.2}{2.8}{\angstrom}) the true ground state is approximately proportional to $\ket{01} + \epsilon' e^{i\phi'} \ket{10}$. Neither single qubit gates nor long wait times $t_0$ are required to adjust the transfer phase $\Delta B t_0 /\hbar$ of the exchange pulse. Hence, in that region, the MET is neither dependent on $\textrm{IQ}_{\mathrm{max}}$ nor $\Delta B$. However, it is inversely proportional to the exchange $J_{\mathrm{max}}$. Finally, for smaller bond distances (\qtyrange{0.5}{2.2}{\angstrom}), the true ground state is proportional to $\ket{01} + \epsilon'' e^{i\phi''} \ket{10} + \delta_{00}\ket{00} + \delta_{11}\ket{11}$, with some amplitudes on $\ket{00}$ and $\ket{11}$. Preparing the amplitudes on $\ket{00}$ and $\ket{11}$ requires single qubit drives and increasing $\textrm{IQ}_{\mathrm{max}}$ decreases the MET proportionally, see \cref{fig: mol limit}(top right). Finally, the exchange pulse is needed to prepare probability in $\ket{10}$ with the right amplitude and phase. We conjecture that the trends in $J_{\max}$ and $\Delta B$ arise from the same mechanisms for producing a superposition of $\ket{01}$ and $\ket{10}$ with the correct phase as described for \ce{H2}.

\section{Discussion and Conclusion}
\label{sec: conclusion}

In this article, we have numerically characterized the METs for state preparation in silicon quantum processors. These METs lower bound the ultimate performance for both gate- and pulse-based state preparation and unitary evolution.

We investigated METs for pulse-based preparation of molecular ground states. We found that the ground states of \ce{H2}, \ce{HeH+}, and \ce{LiH} can be prepared in \SI{2.4}{\ns}, \SI{4.4}{ns}, and \ce{27.2}{ns}, respectively. These METs are less than the \SI{200}{ns} required to prepare arbitrary states on two qubits. Additionally, they are an order of magnitude faster than the fastest single qubit $\pi$ rotations in silicon \cite{Stano2022} and the numerically-estimated METs for superconducting transmon qubits from Ref.~\cite{meiteiGatefreeStatePreparation2021a}.

The acceleration observed for molecular ground-state preparation is promising for pulse-based simulations for chemistry purposes. The speed-up stems from the weak entanglement in molecules. Conventional two-qubit gates (e.g., the CNOT or CZ gates) can create maximal entanglement between two qubits. Consequently, gate-based computations utilize significantly longer two-qubit interactions than needed---this is ill-suited for molecular ground-state preparations. Moreover, gate-based computation relies on slow single-qubit gates to limit the amount of entanglement induced during the computation. On the other hand, a pulse-based approach enables the generation of weak entanglement via fast purpose-tailored pulses.

We investigated METs for transitions between Haar-random states and demonstrated that a four-qubit silicon quantum processor can transition between any pair of states within \SI{50}{\ns}. Our data suggests that single-qubit state transitions can be achieved faster by partially entangling the qubits during the operation and disentangling them at the end. This peculiar effect should be studied in more detail in the future.

Finally, we investigated the METs' dependence on silicon device parameters such as the maximally allowed microwave drive, the maximally allowed exchange coupling, or the Zeeman-splitting inhomogeneity. We demonstrated that fast exchange is vital for short state-preparation METs: Decreasing the maximal exchange coupling from \SI{8}{\GHz} to \SI{250}{\MHz} makes little difference to METs. However, decreasing the maximal exchange coupling any further to \SI{10}{\MHz} can, for some problem instances, increase the MET by a factor of $\sim100$. On the other hand, we showed that most METs scale directly proportional to the maximal microwave drive amplitude, which we identify as the limiting factor for most silicon METs.

Our results give hope that pulse-based state preparation on silicon hardware could lead to a new generation of intermediate-scale quantum algorithms that are significantly more resilient to noise.

\begin{acknowledgments}
The authors would like to thank Johnathan Thio, Kyle Sherbert, Ayush Asthana, Guo Xuan Chan, Stefano Bosco, Leon C.~Camenzind, Andrew J.~Ramsay, and Charles G.~Smith for useful discussions.
\end{acknowledgments}

\bibliography{manuscript}

\newlength{\doublecolumnwidth}
\setlength{\doublecolumnwidth}{\columnwidth}

\appendix
\renewcommand{\thesubsection}{\Alph{section}.\arabic{subsection}}

\section{Footnotes}\label{footnotes}
{\footnotesize
\noindent\textsuperscript{a}From Ref.~\cite{Yoneda2021}

\noindent\textsuperscript{b}The fastest single qubit gates recorded on silicon electron Loss-DiVincenzo qubits take approximately \SI{50}{\ns} \cite{Stano2022}. This corresponds to a drive strength of $2\times\SI{20}{\MHz}$. However, as there are $S$ signals on the line, each with an in-phase and quadrature component, we divide by $\sqrt{2}S$. As we work with up to four qubits in this paper, we take $S=4$ in this expression for all devices considered to enable fair comparisons.

\noindent\textsuperscript{c}The order of magnitude is found by using $\left|J\right|=4\frac{t^2}{U}$ where $t$ is the interdot hopping strength and $U$ is the double occupation energy. Ref.~\cite{Simmons2009} measures $t\le\SI{12}{\GHz}$ and Ref.~\cite{PhysRevB.83.235314} fits for  $U\approx\SI{240}{\GHz}$ using the data from Ref.~\cite{Simmons2009}. Additionally, Ref.~\cite{t_values} measures $t\le\SI{100}{\GHz}$ and we estimate $U\approx\SI{30}{THz}$ as half the width of the (1,1) diamond along the detuning axis in Fig.~7 using the quoted detuning axis lever arm of \SI{0.3}{eV\per V}.
}

\section{Cost Function}\label{app: cost function}
Throughout this article, we consider the cost function
\begin{align}
  f(\bm{x})\coloneqq\mel{\psi(\bm{x};T)}{\hat C}{\psi(\bm{x};T)}.
\end{align}
If $\hat C$ acts on $n$ qubits we can estimate $f(\bm{x})$ by decomposing $\hat C$ as a sum of weighted Pauli strings:
\begin{align}
  \hat C\equiv\sum_{\hat P\in\mathcal P_n}c_{\hat P}\hat P,
\end{align}
where $\mathcal P_n$ is the set of Pauli strings on $n$ qubits and $c_{\hat P}$ is the coefficient of the Pauli string $\hat P\in\mathcal P_n$. Thus, 
\begin{align}
  f(\bm{x})\equiv\sum_{\hat P\in\mathcal P_n}c_{\hat P}\mel{\psi(\bm{x};T)}{\hat P}{\psi(\bm{x};T)}.
\end{align}

By preparing many copies of $\ket{\psi\left(\bm{x}; T\right)}$ the expectation value $\mathbb E_{\hat P}\coloneqq\mel{\psi\left(\bm{x}; T\right)}{\hat P}{\psi\left(\bm{x}; T\right)}$ of each Pauli string $\hat P$ with non-zero coefficient $c_{\hat P}$ can be estimated as follows: To rotate into the eigenbasis of $\hat P$ apply a Haramard ($R_x\left(-\pi/2\right)$) gate to every qubit upon which $\hat P$ acts as an $X$ ($Y$) gate. If the Hadamard and $R_x\left(-\pi/2\right)$ gates are only calibrated to start at discrete clock times $t=m\Delta\tau$ for $m\in\mathbb N$ then one simply needs to set $g(t)$ and $J(t)$ to zero for $t\in\left[T,\left\lceil T/\Delta\tau\right\rceil \Delta\tau\right]$ and then apply the Hadamard and $R_x\left(-\pi/2\right)$ gates at $t=\left\lceil T/\Delta\tau\right\rceil \Delta\tau$. Next, we can measure the $l$ qubits in the computational basis upon which $\hat P$ acts non-trivially. If the Hamming weight of the resulting $l$-bitstring is even (odd) then $\ket{\psi\left(\bm{x}; T\right)}$ was measured as being in the $+1$ ($-1$) eigenspace of $\hat P$. Let $N_{\hat P}^{\pm}$ be the number of measurements in the $\pm1$ eigenspaces of $\hat P$. We can estimate $\mathbb E_P$ as $\tilde {\mathbb E}_{\hat P}\coloneqq\left(N_{\hat P}^+-N_{\hat P}^-\right)/N_{\hat P}$ where $N_{\hat P}=N_{\hat P}^++N_{\hat P}^-$ is the total number of measurements of $\hat P$. Ref.~\cite{Rubin_2018} proves that given $\bar N\coloneqq \sum_{\hat Q\in\mathcal P_n}N_{\hat Q}$ total measurements, the optimal distribution of these across the Pauli strings is:
\begin{equation}
    N_{\hat P}=\frac{\left|c_{\hat P}\right|}{\sum_{\hat Q\in\mathcal P_n}\left|c_{\hat Q}\right|}\bar N.
\end{equation}
Thus, using Hoeffding's inequality \cite{Hoeffding}, one can show that with
\begin{align}
  N_{\hat P}=\left|c_{\hat P}\right|\sum_{\hat Q\in\mathcal P_n}\left|c_{\hat Q}\right|\frac{2}{\varepsilon^2}\log\left(\frac{2}{\delta}\right)
\end{align}
measurements for each $\hat P\in\mathcal P_n$ we can calculate the estimator $\tilde f\left(\bm{x}\right)$ of $f\left(\bm{x}\right)$ within an accuracy $\varepsilon$ with at least a probability $1-\delta$ as
\begin{align}
    \tilde f\left(\bm{x}\right)
    =\sum_{\hat P\in\mathcal P_n}c_{\hat P}\tilde{\mathbb{E}}_{\hat{P}}=\sum_{\hat P\in\mathcal P_n}c_{\hat P}\left(N_{\hat P}^+-N_{\hat P}^-\right)/N_{\hat P}.
\end{align}

\section{Bounds on the Maximal METs for one and two-qubit systems}
\label{app: gate-bounds}

In this appendix, we bound the maximal MET for state preparation of one- and two-qubit systems. We utilize these bounds in the body of the paper as a comparison and to ensure that we have scanned to a large enough total evolution time $T$ to obtain the entire cumulative probability distribution $\mathbb P\left(\textrm{MET}\le T\right)$.

In deriving these bounds, we assume that within the gate-based model, the quantum computer operates on a clock: parameterized gates have an associated execution time that is independent of the parameter value. If the execution time for a parameterized gate naturally depends on the parameter, we can just fix the execution time as the maximal execution time over the parameter space. For all naturally slower gates, we simply apply the identity operation for the remainder of the execution time.

\subsection{One-Qubit-State Preparation Maximal MET}

Any unitary on one-qubit can be comprised of three Euler rotations, for example:
\begin{align}
  U\left(\alpha,\beta,\gamma\right)\equiv R_X\left(\alpha\right)R_Y\left(\beta\right)R_X\left(\gamma\right).
\end{align}
The rotation is characterized by three proper Euler angles: $\alpha,\gamma\in\left(-\pi,\pi\right]$ and $\beta\in\left[-\pi/2,\pi/2\right]$. The fastest single-qubit gate recorded in silicon electron Loss-DiVincenzo qubits took approximately \SI{50}{\ns} and corresponds to a $\pi$ rotation about a single axis \cite{Corna2018, Stano2022}. In this article, we multiplied this single qubit gate time by four so that a four-qubit system would not implement a faster single qubit gate by aligning the four drive frequencies to the Rabi frequency of a single qubit. That is we take the fastest single-qubit gate to take \SI{200}{\ns}. Thus, at most, a single-qubit gate implemented using the three Euler rotations within our system will take $2.5\times\SI{200}{\ns}=\SI{500}{\ns}$. Further, an arbitrary single-qubit gate must take at least as long as the time for the fastest $\pi$ rotation: \SI{200}{\ns}.

\subsection{Two-Qubit-State Preparation Maximal MET}\label{app: two-qubit-gates}

\citeauthor{CNOTCount} prove in Ref.~\cite{CNOTCount} that a single CNOT and four single-qubit gates are sufficient to transition between arbitrary two-qubit states. Here, we follow a similar proof but replace the CNOT with a power of SWAP. The exchange operator in \cref{eq: exchange operator} is the native two-qubit operation in the rotating frame considered in this article. However, we can use the single-qubit gates in the procedure below to move back into the lab frame so that the native two-qubit operation is once again the power of SWAP.

\begin{lemma}
  Four single-qubit gates and $\textrm{SWAP}^\alpha$  with $\alpha\in\left[0, 1\right]$ are sufficient to transition between arbitrary two-qubit states.
\end{lemma}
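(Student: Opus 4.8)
The plan is to follow the CNOT-based construction of Ref.~\cite{CNOTCount}, replacing the CNOT by $\textrm{SWAP}^\alpha$ and leaning on the Schmidt decomposition. Write the desired map as $\ket{\phi}=\left(A_1\otimes B_1\right)\textrm{SWAP}^\alpha\left(A_0\otimes B_0\right)\ket{\psi_0}$, where $A_0,B_0,A_1,B_1$ are the four single-qubit gates. Every two-qubit pure state has a Schmidt form $\cos\theta\ket{a_0}\ket{b_0}+\sin\theta\ket{a_1}\ket{b_1}$ with $\theta\in\left[0,\pi/4\right]$, and local unitaries act independently on the two Schmidt bases. I would therefore choose $A_0\otimes B_0$ to rotate $\ket{\psi_0}$ into the canonical state $\cos\theta_0\ket{01}+\sin\theta_0\ket{10}$ --- deliberately placing its support in the $\{\ket{01},\ket{10}\}$ subspace, which is the only place $\textrm{SWAP}^\alpha$ acts nontrivially --- and note that $A_1\otimes B_1$ can afterwards send any state of Schmidt angle $\theta_1$ to the target $\ket{\phi}$ (two bipartite pure states with equal Schmidt spectra are local-unitary equivalent). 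The whole lemma thus collapses to one question: can $\textrm{SWAP}^\alpha$ take Schmidt angle $\theta_0$ to an arbitrary Schmidt angle $\theta_1$?

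To answer it I would use that $\textrm{SWAP}^\alpha$ fixes the symmetric subspace and multiplies the singlet $\tfrac{1}{\sqrt2}\left(\ket{01}-\ket{10}\right)$ by $e^{i\pi\alpha}$, so on $\{\ket{01},\ket{10}\}$ it is the matrix $\tfrac12\begin{pmatrix}1+e^{i\pi\alpha}&1-e^{i\pi\alpha}\\1-e^{i\pi\alpha}&1+e^{i\pi\alpha}\end{pmatrix}$, and the concurrence of an output $c_{01}\ket{01}+c_{10}\ket{10}$ is $C=2\lvert c_{01}c_{10}\rvert=\sin2\theta_1$. The decisive ingredient is to fold a relative phase $e^{i\pi/2}$ between $\ket{01}$ and $\ket{10}$ into $A_0$ (physically, the wait-time phase $\Delta B t_0$ of \cref{app: ex}); a short computation then gives $C=\lvert\sin\left(2\theta_0-\pi\alpha\right)\rvert$. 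Matching this to $\sin2\theta_1$ is immediate: $\alpha=\tfrac{2}{\pi}\left(\theta_0-\theta_1\right)$ works when $\theta_1\le\theta_0$ and $\alpha=\tfrac{2}{\pi}\left(\theta_0+\theta_1\right)$ when $\theta_1>\theta_0$, and since $\theta_0,\theta_1\in\left[0,\pi/4\right]$ both choices lie in $\left[0,1\right]$. Applying $A_1\otimes B_1$ to align Schmidt vectors and phases with $\ket{\phi}$ then closes the argument.

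I expect the genuine obstacle to be exactly this entanglement-matching step, and specifically the direction that \emph{lowers} entanglement. If one works from the real canonical form $\cos\theta_0\ket{01}+\sin\theta_0\ket{10}$ with no inserted phase, one finds $\cos2\theta_1=\cos2\theta_0\cos\pi\alpha$, so $\textrm{SWAP}^\alpha$ can only raise the entanglement toward the maximally entangled point; taken at face value this would falsely suggest the lemma fails whenever $\ket{\psi_0}$ is more entangled than $\ket{\phi}$. The resolution --- and the one place needing real care rather than bookkeeping --- is that the inserted $\pi/2$ phase routes amplitude through the antisymmetric singlet sector, on which $\textrm{SWAP}^\alpha$ imprints a relative phase, so that the gate can also \emph{disentangle} a state completely. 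Verifying that the combination of $\textrm{SWAP}^\alpha$ and this inserted local phase, together with the surrounding single-qubit gates, sweeps the output concurrence across all of $\left[0,1\right]$ is the crux on which the sufficiency claim rests.
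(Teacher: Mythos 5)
Your proof is correct, and it follows the same skeleton as the paper's: Schmidt-decompose, use the first layer of local gates to reach a canonical state supported on $\{\ket{01},\ket{10}\}$, use $\textrm{SWAP}^\alpha$ to set the Schmidt angle, and use the second layer of local gates to rotate onto the target. The difference is that the paper's proof simply asserts that the power of SWAP maps $\cos\left(\theta\right)\ket{01}+\sin\left(\theta\right)\ket{10}$ to a state of arbitrary Schmidt angle $\theta'$, with no verification that every $\theta'$ is reachable. Your analysis shows that this assertion, taken at face value for the real canonical form, is not justified: one finds $\cos 2\theta' = \cos 2\theta_0\left|\cos\pi\alpha\right|$, so $\textrm{SWAP}^\alpha$ acting on the real canonical state can only \emph{increase} entanglement, and the construction as literally written cannot reach targets less entangled than the input. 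Your fix --- folding a relative phase $i$ between $\ket{01}$ and $\ket{10}$ into the first local layer (a legitimate single-qubit operation, and physically the wait-time phase $\Delta B t_0/\hbar$ of \cref{app: ex}), which yields the output concurrence $C=\left|\sin\left(2\theta_0-\pi\alpha\right)\right|$ and admissible exponents $\alpha=\tfrac{2}{\pi}\left(\theta_0-\theta_1\right)$ or $\alpha=\tfrac{2}{\pi}\left(\theta_0+\theta_1\right)$, both in $\left[0,1\right]$ --- supplies exactly the missing ingredient. So what you flag as ``the crux'' is not a defect of your own argument; it is a genuine gap in the paper's proof, and your version is the complete one: same decomposition, but with the entanglement-matching step actually proved rather than assumed.
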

\begin{proof}
  We can express our initial state $\ket{\psi_0}$ as a Schmidt decomposition:
  \begin{align}
    \ket{\psi_0}=\cos\left(\theta\right)\ket{a_0}\ket{b_0}+\sin\left(\theta\right)\ket{a_1}\ket{b_1},
  \end{align}
  where $\left\{\ket{a_0}, \ket{a_1}\right\}$ and $\left\{\ket{b_0}, \ket{b_1}\right\}$ are orthonormal bases of each qubit. By applying a single-qubit gate to each qubit, we can induce the following map
  \begin{align}\label{eq: canonical two-qubit form}
    \ket{\psi_0}\mapsto\ket{\psi'}\coloneqq\cos\left(\theta\right)\ket{0}\ket{1}+\sin\left(\theta\right)\ket{1}\ket{0}.
  \end{align}
  Now, applying a power of SWAP gate, we will obtain
  \begin{align}
    \ket{\psi'}\mapsto\ket{\phi'}\coloneqq\cos\left(\theta'\right)\ket{0}\ket{1}+\sin\left(\theta'\right)\ket{1}\ket{0}.
  \end{align}
  Finally, applying a single-qubit gate to each qubit again, we can obtain an arbitrary state
  \begin{align}
    \ket{\phi'}\mapsto\cos\left(\theta'\right)\ket{a'_0}\ket{b'_0}+\sin\left(\theta'\right)\ket{a'_1}\ket{b'_1}.
  \end{align}
\end{proof}
In this article we limit $J$ to \SI{1}{\GHz}, thus, $\textrm{SWAP}^\alpha$ for $\alpha\in\left[0, 1\right]$ takes less than \SI{0.5}{ns} to impliment. Thus, the transition between arbitrary two-qubit states can always be performed in under \SI{1000.5}{ns} by applying the pairs of single-qubit gates in parallel.

When preparing the ground states of \ce{H2} and \ce{HeH+} in \cref{sec: molecular} we start in the $\ket{01}$ state. Thus, for these preparations, we can start from the state $\ket{\psi'}$ in \cref{eq: canonical two-qubit form} and forgo the first layer of arbitrary single-qubit rotations. This reduces the upper bound on the gate-based two-qubit molecular ground-state preparation times to \SI{500.5}{\ns}.

To get the lower bound, we note we must use at least two single qubit gates in parallel and a two-qubit gate to prepare an arbitrary two-qubit state. Thus, the arbitrary two-qubit state preparation time is lower bounded by \SI{200.5}{\ns}.

\section{Homogeneous and Isotropic Speed Limit}
\label{app: homogeneous and isotropic speed limit}

In this appendix, we derive the cumulative probability distribution for the MET between Haar random input and output states in a $d$-dimensional Hilbert space with a homogeneous and isotropic quantum speed limit. More precisely, a Haar distributed pure state $\ket{\varphi}\sim\mu$ is a pure state generated by multiplying a Haar random unitary by a fixed pure state.

The geodesic distance between two pure states $\ket{\psi_0}$ and $\ket{\phi}$ under the Fubini-Study metric is
\begin{align}
  d\left(\ket{\psi_0},\ket{\phi}\right)=2\cos^{-1}\left|\braket{\psi_0}{\phi}\right|.
\end{align}
If we take the homogeneous and isotropic quantum speed limit to be $v$, then the MET is
\begin{align}
  \textrm{MET}\left(\ket{\psi_0},\ket{\phi}\right)=\frac{2}{v}\cos^{-1}\left|\braket{\psi_0}{\phi}\right|.
\end{align}

For notational ease, let
\begin{align}
    f_{X\sim D}\left[g\left(X\right)\right]\left(x\right)\dd{y}=\mathbb P_{X\sim D}\left(g\left(x\right)\le g(X)< g\left(x\right)+\dd{y}\right)
\end{align}
be the probability density function of $g\left(X\right)$ where $X\sim D$. Note that the invariance of the Haar distribution under unitary transformations allows us to fix one of the states
\begin{align}
  f_{\ket{\psi_0},\ket{\phi}\sim\mu}\left[\left|\braket{\psi_0}{\phi}\right|\right]=f_{\ket{\phi}\sim\mu}\left[\left|\braket{\psi_0}{\phi}\right|\right].
\end{align}
Without loss of generality, let $\bra{\psi_0}=\begin{bmatrix}1&0&\cdots&0\end{bmatrix}$ and fix the global phase of $\ket{\phi}$ such that the first entry is real. Thus, $f_{\ket{\phi}\sim\mu}\left[\left|\braket{\psi_0}{\phi}\right|\right]$ is equal to the probability density function of the magnitude of the first component of a $2d-1$ dimensional real vector that is distributed uniformly over the $2d-1$ real unit sphere. Therefore, we find \cite{43fd0a6c-7343-3149-83c3-841692a8522d}:
\begin{align}
  f_{\ket{\psi_0},\ket{\phi}\sim\mu}\left[\left|\braket{\psi_0}{\phi}\right|\right]&=\frac{2\left(1-\left|\braket{\psi_0}{\phi}\right|^2\right)^{d-2}}{B\left(d-1,\frac{1}{2}\right)}\label{eq: initial dist}\\
  &=\frac{2\left(1-\cos^2\left[\frac{v}{2}\textrm{MET}\left(\ket{\psi_0},\ket{\phi}\right)\right]\right)^{d-2}}{B\left(d-1,\frac{1}{2}\right)}\\
  &=\frac{2\sin^{2d-4}\frac{v}{2}\textrm{MET}\left(\ket{\psi_0},\ket{\phi}\right)}{B\left(d-1,\frac{1}{2}\right)},
\end{align}
where $B\left(\bullet, \bullet\right)$ is the beta function.

Now we can use the identity $f_{X\sim D}\left[g\left(X\right)\right]\left(y\right)\equiv f_{X\sim D}\left[X\right]\left(g\left(y\right)\right)\left|\dv{g}{y}\right|^{-1}$ to show
\begin{multline}\label{eq: probability density function}
  f_{\ket{\psi_0},\ket{\phi}\sim\mu}\left[\textrm{MET}\left(\ket{\psi_0},\ket{\phi}\right)\right]\left(T\right)\\=\frac{v}{B\left(d-1,\frac{1}{2}\right)}\sin^{2d-3}\left(\frac{vT}{2}\right).
\end{multline}

Integrating \cref{eq: probability density function} and performing a change of variables gives our final result
\begin{align}\label{eq: app iso eq}
  \mathbb P\left(\textrm{MET}\le T\right)=\frac{2}{B\left(d-1,\frac{1}{2}\right)}\int\limits_0^{\frac{vT}{2}}\sin^{2d-3}\left(x\right)\dd{x}.
\end{align}
While this integral can be performed analytically for integer $d$, it is generally cumbersome. We utilize the analytic form of the integral to calculate the curves in \cref{fig: MET distribution}.

\section{Inhomogeneous and Anisotropic Speed Limit}
\label{app: inhomogeneous and anisotropic speed limit}

In this appendix, we will first use the homogeneous and isotropic result to constrain the functional form of the inhomogeneous and anisotropic distribution in \cref{sec: expansion}. Second, we will provide a possible explanation for the vanishing derivatives of $\mathbb P\left(\textrm{MET}\le T\right)$ at the maximal MET in \cref{sec: derivatives}. Finally, we will combine the two results in \cref{sec: numerical fits}.

\subsection{Series Expansion}\label{sec: expansion}

First, note that the probability density function only has support on the finite interval
\begin{align}
    \left[0,\max_{\ket{\psi_0},\ket{\phi}}\textrm{MET}\left(\ket{\psi_0},\ket{\phi}\right)\right].\\
  \nonumber
\end{align}
Thus, define the quantity
\begin{equation}
    \tilde v\coloneqq\frac{\pi}{\displaystyle\max_{\ket{\psi_0}, \ket{\phi}}\textrm{MET}\left(\ket{\psi_0}, \ket{\phi}\right)},
\end{equation}
which will play an analogous role to the speed limit $v$. Using this definition, we Maclaurin-expand the function
\begin{align}
    g&\coloneqq f_{\ket{\psi_0},\ket{\phi}\sim\mu}\left[\textrm{MET}\left(\ket{\psi_0},\ket{\phi}\right)\right]\circ\frac{2}{\tilde v}\sin^{-1}\\
    g\left(x\right)&\equiv\sum_{n=0}^\infty c_nx^n\quad x\in\left[0,\varepsilon\right),
\end{align}
for some radius of convergence $\varepsilon$. As $\sin^{-1}$ is an analytic function then the power series
\begin{equation}\label{eq: sin expansion}
    f_{\ket{\psi_0},\ket{\phi}\sim\mu}\left[\textrm{MET}\left(\ket{\psi_0},\ket{\phi}\right)\right]\left(T\right)\equiv \sum_{n=0}^\infty c_n\sin^n\left(\frac{\tilde vT}{2}\right),
\end{equation}
converges on the interval $\left[0,\frac{2}{\tilde v}\sin^{-1}\varepsilon\right)$. Further, as the Maclaurin series converges uniformly, so too does the series in \cref{eq: sin expansion}. Therefore, we can integrate $f_{\ket{\psi_0},\ket{\phi}\sim\mu}\left[\textrm{MET}\left(\ket{\psi_0},\ket{\phi}\right)\right]\left(T\right)$ by integrating each term in \cref{eq: sin expansion}:
\begin{align}
  \mathbb P\left(\textrm{MET}\le T\right)&\equiv\int\limits_0^{\frac{\tilde vT}{2}}\dd{x}\sum_{n=0}^\infty c_n\sin^n\left(x\right),\\
  &=\sum_{n=0}^\infty c_n\int\limits_0^{\frac{\tilde vT}{2}}\dd{x}\sin^n\left(x\right).
\end{align}
We will use this expression to find coefficients that must vanish.

First, note that
\begin{equation}
    \int\limits_0^{\delta}\dd{x}\sin^n\left(kx\right)=\bigO{\delta^{n+1}},
\end{equation}
as $\delta\to 0$. Thus, the first non-zero $c_n$ must be positive as $\mathbb P\left(\textrm{MET}\le T\right)$ is positive. Additionally, $\mathbb P\left(\textrm{MET}\le T\right)$ must be upper bounded by the corresponding cumulative probability distribution for a $d$-dimensional Hilbert space with a homogeneous and isotropic speed limit given by the maximal speed limit of the inhomogeneous and anisotropic case. Thus, the first non-zero $c_n$ must be for $n\ge 2d-3$. Therefore we find
\begin{equation}
  \mathbb P\left(\textrm{MET}\le T\right)\equiv\int\limits_0^{\frac{\tilde vT}{2}}\dd{x}\sum_{\mathclap{n=2d-3}}^\infty c_n\sin^n\left(x\right),
\end{equation}
where the first non-zero $c_{n}$ is positive. Only the first term in the expansion remains for the homogeneous and isotropic case. Generally, the terms for $n>2d-3$ are corrections due to the inhomogeneities and anisotropies.

\subsection{Vanishing Derivatives at Maximal MET}\label{sec: derivatives}

In this section, we provide a possible explanation for the vanishing derivatives of $\mathbb P\left(\textrm{MET}\le T\right)$ observed in \cref{fig: MET distribution}. First, we define $T_\star$ as the minimum $T$ that satisfies $\mathbb P\left(\textrm{MET}\le T\right)=1$. Assuming $\mathbb P\left(\textrm{MET}\le T\right)$ is Taylor expandable about $T=T_\star$ with some finite radius of convergence, then for the first $k$ derivatives of $\mathbb P\left(\textrm{MET}\le T\right)$ to vanish at $T=T_\star$ we require $1-\mathbb P\left(\textrm{MET}\le T\right)=\mathbb P\left(\textrm{MET}\ge T\right)=\bigO{[T-T_\star]^{k+1}}$ at $T=T_\star$. Thus, for any $\tau>0$ we find $\mathbb P\left(\textrm{MET}\ge T_\star-\tau\right)>0$.

Next, we note that $\mathbb P\left(\textrm{MET}\ge T\right)$ is a normalized volume in the manifold of state pairs $\mathcal M$:
\begin{align}
    \mathbb P\left(\textrm{MET}\ge T\right)&=\int\limits_{\mathclap{\textrm{MET}\left(\ket{\psi_0},\ket{\phi}\right)\ge T}}\dd{\ket{\psi_0}}\dd{\ket{\phi}}.
\end{align}
Note that for the homogeneous and isotropic case, the volume $\mathbb P\left(\textrm{MET}\ge T\right)$ is linear in $(T-T_\star)$ at $T=T_\star$. We now give one possible mechanism in the inhomogeneous and anisotropic case for the volume to scale as $\mathbb P\left(\textrm{MET}\ge T\right)=\bigO{[T-T_\star]^{k+1}}$.

Consider the subset of states pairs $\mathcal S\subset\mathcal M$ such that $\textrm{MET}\left(\mathcal S\right)=T_\star$ and let us assume that $\mathcal S$ is closed and the normal space well defined at all points in $\mathcal S$. If the velocity one can move at through the manifold of quantum states is limited to be finite at all points and in all directions, then $\mathcal S$ must expand from all points a distance $\bigO{[T-T_\star]^\alpha}$ with $\alpha\ge 1$ in each direction in the normal space to that point. Thus, the volume $\mathbb P\left(\textrm{MET}\ge T\right)=\bigO{[T-T_\star]^{\alpha D}}$ where $D$ is the maximal dimension of the normal spaces to $\mathcal S$. Therefore, if at least the first $D-1$ derivatives of $\mathbb P\left(\textrm{MET}\le T\right)$ will vanish.

The maximal dimension of the normal space $D$ can be increased by breaking symmetries. To gain some intuition, consider the following broken symmetry: Suppose not all input states $\ket{\psi_0}$ have a corresponding output state $\ket{\phi}$ that maximizes the MET. If the set of $\ket{\psi_0}$ that do have at least one $\ket{\phi}$ that maximizes the MET are measured zero in the manifold of pure states, then $D>1$. Therefore, at least the first derivative of $\mathbb P\left(\textrm{MET}\le T\right)$ vanishes.

\subsection{Numerical Fitting}\label{sec: numerical fits}

In numerical fitting, we have to truncate our series expansion
\begin{equation}\label{eq: truncation}
  \mathbb P\left(\textrm{MET}\le T\right)\equiv\int\limits_0^{\frac{\tilde vT}{2}}\dd{x}\sum_{\mathclap{n=2d-3}}^L c_n\sin^n\left(x\right).
\end{equation}
If we want to fit \cref{eq: truncation} to the whole curve of $\mathbb P\left(\textrm{MET}\le T\right)$ we are implicitly assuming the convergence radius $\varepsilon\ge \frac{\pi}{2}$. Thus, if we know that the first $k$ derivatives at $T_\star$ must vanish by symmetry arguments, we can reduce the number of free parameters in the fit. In our work, we will assume the first derivative vanishes, which gives the following constraint:
\begin{equation}\label{eq: constraint}
    c_L=-\sum_{\mathclap{n=2d-3}}^{L-1}c_n.
\end{equation}

\section{Bootstrapping}\label{app: bootstrapping}

We apply bootstrapping to quantify the error in the cumulative distributions, which we estimate by sampling pairs of states from the Haar distribution. Our estimated cumulative distribution consists of 1024 input-output state pairs. We resample, with replacement, from this sample, 1024 input-output state pairs and use this resample to compute a new cumulative distribution. We repeat this $10^5$ times to get $10^5$ resampled cumulative distributions. For each value of $T$, we calculate the confidence interval and variance of the resampled cumulative probabilities. We use the variances during the chi-squared fits in \cref{fig: MET distribution} and \cref{app: chis}. When the variance of the resampled cumulative probabilities is less than $1/1024^2$, we take $1/1024^2$ as the variance, as this is the square of the precision of our estimate.

\section{Chi-Squared Values}
\label{app: chis}

In this appendix, we perform chi-squared fits using the variances from \cref{app: bootstrapping} to the data in \cref{fig: MET distribution}. First, we fit using the homogeneous and isotropic approximation \cref{eq: app iso eq} and a normalization constraint. Second, we perform a higher order fit using the truncation in \cref{eq: truncation}, under the constraints of normalization and \cref{eq: constraint}. In the higher order fit, we include the minimal number of terms such that chi-squared $\chi^2$ divided by the number of degrees of freedom $N_{\textrm{DoF}}$ is less than unity ($\chi^2/N_{\textrm{DoF}}<1$) or $\chi^2/N_{\textrm{DoF}}$ stops decreasing monotonically with the addition of further terms. We present the values for $\chi^2/N_{\textrm{DoF}}$ for the higher order fit in \cref{table: chis}.

\begin{table}[H]
  \begin{tabularx}{\columnwidth}{lXrXrXr}
    \toprule
    &&&&Number of&&Terms in\\
    Qubits && $\chi^2/N_{\textrm{DoF}}$ && Parameters &&  Expansion \\\colrule
    1&&0.62&&6&&7\\
    2&&0.77&&4&&5\\
    3&&1.16&&5&&6\\
    4&&7.53&&7&&8\\
    \botrule
  \end{tabularx}
  \caption{\textbf{$\bm{\chi^2/N_{\textrm{DoF}}}$ values for higher order fits in \cref{fig: MET distribution}.} We also present the number of parameters and terms in \cref{eq: truncation} used for the fit under the constraints of normalization and \cref{eq: constraint}.}
  \label{table: chis}
\end{table}

\section{Exchange in the Rotating Frame}
\label{app: ex}
In this appendix, we analyze the unitary evolution induced by the exchange Hamiltonian in the rotating frame of the drift Hamiltonian:
\begin{equation}
  \hat{H}_J = -\sum_{i=1}^{N-1} \frac{J_i(t)}{4} \hat {\mathcal E}_{i, i+1}\left(t\right).
\end{equation}
On the subspace of two qubits, encoded by the spins of the $i$th and $j$th electron, its operator in matrix form is
\begin{equation}\label{eq: exchange operator}
  \!\!\!\hat {\mathcal E}_{i,j}\!\left(t\right)\!\coloneqq\!\!\begin{bNiceMatrix}[last-col]
  1&&&&\ket{\uparrow_i,\uparrow_j}\\
  &-1&\!2e^{i\left(B_j-\!B_i\right)t\!/\!\hbar}&&\ket{\uparrow_i,\downarrow_j}\\
  &2e^{i\left(B_i-\!B_j\right)t\!/\!\hbar}&-1&&\ket{\downarrow_i,\uparrow_j}\!\\
  &&&1&\ket{\downarrow_i,\downarrow_j}
\end{bNiceMatrix}.\!\!\!
\end{equation}
For a constant $J_i\left(t\right)=J$ the unitary evolution from a time $t_0$ to $t_0+\Delta t$ in the rotating frame induced by the exchange Hamiltonian is 
\begin{align}
  \hat U_J(t_0\to t_0+\Delta t)=
  \begin{bmatrix}
    U_{0,0} &         &         &         \\ 
            & U_{1,1} & U_{1,2} &         \\
            & U_{2,1} & U_{2,2} &         \\
            &         &         & U_{3,3} \\
  \end{bmatrix}.
\end{align}
Using $\Jeff \coloneqq\sqrt{J^2+\Delta B^2}$ its non-zero entries are
\begin{align}
    U_{0,0} &\!=\! U_{3,3} \!=\! \exp(i\frac{J\Delta t}{4\hbar}), \\
    U_{1,1} &\!=\!\!
    \left[
    \!\cos(\!\frac{\Jeff\Delta t}{2\hbar}\!)
    \!\!-\!i \frac{\Delta B}{\Jeff}\!\sin(\!\frac{\Jeff\Delta t}{2\hbar}\!)
    \!\right] \!\!
    e^{-i(\frac{J}{4}-\frac{\Delta B}{2})\frac{\Delta t}{\hbar}},
    \\
    U_{2,2} &\!=\!\!
    \left[\!
    \cos(\!\frac{\Jeff \Delta t}{2\hbar}\!)
    \!\!+\!i \frac{\Delta B }{\Jeff}\!\sin(\!\frac{\Jeff\Delta t}{2\hbar}\!)
    \!\right] \!\!
    e^{-i(\frac{J}{4}+\frac{\Delta B}{2})\frac{\Delta t}{\hbar}},
    \\
    U_{1,2} &\!=\!
    i\frac{J}{\Jeff} \sin(\frac{\Jeff\Delta t}{2\hbar})
    e^{i\Delta B t_0/\hbar} e^{-i(\frac{J}{4}-\frac{\Delta B}{2})\frac{\Delta t}{\hbar}}, \\
    U_{2,1} &\!=\!
    i\frac{J}{\Jeff} \sin(\frac{\Jeff \Delta t}{2\hbar})
    e^{-i\Delta B t_0/\hbar} e^{ -i(\frac{J}{4}+\frac{\Delta B}{2})\frac{\Delta t}{\hbar}}.
\end{align}

In the limit $\Delta B \Delta t/\hbar\to 0$ (the relevant limit for the molecular ground-state preparation METs) we find
\begin{equation}
    \left|\Delta B\sin(\frac{\tilde J\Delta t}{2\hbar})\right|\le\left|\Delta B\frac{\tilde J\Delta t}{2\hbar}\right|\to 0,
\end{equation}
and
\begin{align}
    &\left|\left(1-\frac{J}{\tilde J}\right)\sin(\frac{\tilde J\Delta t}{2\hbar})\right|\\&\le\left|\left(\tilde J-J\right)\frac{\Delta t}{2\hbar}\right|\\&=\left|\left(\sqrt{J^2\Delta t^2+\Delta B^2\Delta t^2}-J\Delta t\right)\frac{1}{2\hbar}\right|\\&\to 0.
\end{align}
Thus,
\begin{align}
    U_{1,1} &\approx
    \cos(\frac{\Jeff\Delta t}{2\hbar})
    e^{-i\frac{J}{4}\frac{\Delta t}{\hbar}},
    \\
    U_{2,2} &\approx
    \cos(\frac{\Jeff \Delta t}{2\hbar})
    e^{-i\frac{J}{4}\frac{\Delta t}{\hbar}},
    \\
    U_{1,2} &\approx
    i\sin(\frac{\Jeff\Delta t}{2\hbar})
    \exp(i\frac{\Delta B t_0}{\hbar}) e^{-i\frac{J}{4}\frac{\Delta t}{\hbar}}, \\
    U_{2,1} &\approx
    i\sin(\frac{\Jeff \Delta t}{2\hbar})
    \exp(-i\frac{\Delta B t_0}{\hbar}) e^{ -i\frac{J}{4}\frac{\Delta t}{\hbar}}.
\end{align}
Finally, using $J\Delta t/\hbar\approx\tilde J\Delta t/\hbar$ in the limit $\Delta B\Delta t/\hbar\to 0$, we find $\hat U_J(t_0\to t_0+\Delta t)$ is a \textit{power-of-SWAP like} gate:
\begin{multline}\label{eq: power swap}
  \hat U_J(t_0\to t_0+\Delta t)\approx\\\exp(i\frac{\tilde J\Delta t}{4\hbar})\begin{bmatrix}
        1\\
        &0&e^{i\Delta B t_0/\hbar}\\
        &e^{-i\Delta B t_0/\hbar}&0\\
        &&&1
    \end{bmatrix}^{\dfrac{\tilde J\Delta t}{\pi\hbar}},
\end{multline}
which is (up to a global phase) a rotation about the $\cos(\Delta B t_0/\hbar)X+\sin(\Delta Bt_0/\hbar)Y$ axis of the Bloch sphere of the $\ket{\tilde 0}\equiv\ket{01}$, $\ket{\tilde 1}\equiv\ket{10}$ subspace where $\left\{\ket{\tilde 0},\ket{\tilde 1}\right\}$ is the eigenbasis of the $Z$ operator on the subspace.

\end{document}